\documentclass[11pt]{article}
\usepackage{amsfonts}
\usepackage{amsmath,amsthm}

\usepackage{mathrsfs}
\usepackage{cite}
\usepackage{color}
\usepackage[colorlinks,linkcolor=blue,citecolor=blue]{hyperref}
\usepackage{graphicx,subfigure}
\usepackage[numbers,sort&compress]{natbib}
\usepackage[top=1 in,bottom=1 in,left=1.0 in,right=1.0 in]{geometry}

\newcommand{\bblu}{\begin{color}{blue}}
\newcommand{\bred}{\begin{color}{red}}
\newcommand{\ecl}{\end{color}}
\numberwithin{equation}{section}

\newtheorem{proposition}{Proposition}[section]

\def\t#1{\widetilde{#1}}

\def\b#1{\overline{#1}}

\begin{document}

\title{Explicit quasi-periodic solutions to the discrete derivative nonlinear Schr\"odinger equation}
\author{Xiaoxue Xu$^{1}$, ~Wei Liu$^{1}$,  ~Junyi Zhu$^{1}$
\\
{\small ${}^{1}$School of Mathematics and Statistics, Zhengzhou University, Zhengzhou, Henan 450001, P.R. China}\\
{\small E-mail:\ \ jyzhu@zzu.edu.cn}}

\maketitle

\begin{abstract}
Quasi-periodic solutions to the discrete derivative nonlinear Schr\"odinger equation are investigated. A Lax pair of the discrete integrable equation is derived, by which we obtain a nonlinear integrable symplectic map. Resorting to this map, the Riemann theta function expressions for discrete potentials are calculated.
Consequently, we get the explicit solutions for the discrete derivative nonlinear Schr\"odinger equation, whose quasi-periodic properties are illustrated through the cases of genus one and genus two.
\\[2mm]
Keywords: discrete derivative nonlinear Schr\"odinger equation, quasi-periodic solutions, integrable symplectic map, Riemann theta function
\\[1mm]
MSC numbers: 35Q51, 37K60, 39A36
\end{abstract}

%

\section{Introduction }\label{sec-1}

The derivative nonlinear Schr\"odinger (DNLS) equations represent a fundamental class of integrable nonlinear wave equations with profound physical significance
and mathematical richness \cite{Rogister,Johnson,Ichikawa,Kodama,Rangwala,Imai,Yan,Ling}. Among its various forms, the first type DNLS equation is
\begin{equation}\label{KN}
\mathrm{i} u_{t}+u_{xx}-2\mathrm{i}\delta (|u|^2u)_{x}=0,\ \ \delta=\pm1,
\end{equation}
where $\mathrm{i^{2}}=-1, |u|^2=u\bar{u}$, and bar represents complex conjugate. The inverse scattering transform for equation \eqref{KN} was first discussed by Kaup and Newell \cite{Kaup}.
The unreduced DNLS system reads
\begin{subequations}\label{eq:0.0}
\begin{align}
&u_t-u_{xx}+2(u^2v)_x=0,\label{eq:0.0a}\\
&v_t+v_{xx}+2(uv^2)_x=0,\label{eq:0.0b}
\end{align}
\end{subequations}
which yields the DNLS equation \eqref{KN} by replacing $x\to\mathrm{i} x, t\to\mathrm{i} t$ and the reduction $v=-\delta \bar{u}$.

Discrete integrable systems form an important part of the integrable system theory. With the development and progress of discrete integrable systems, extensive research has been devoted to
the discrete counterparts of DNLS system, covering their Lax representations \cite{Tsuchida}, Hamiltonian structures \cite{Tsuchida,Khanizadeh}, B$\mathrm{\ddot{a}}$cklund transformations \cite{Choudhury},
Darboux transformations \cite{Xu,Liu}, and various classes of explicit solutions, including soliton and quasi-periodic solutions \cite{Liu,Geng}. In the present paper, we will investigate the quasi-periodic
solutions for the fully discrete DNLS equation
\begin{subequations}\label{eq:0.1}
 \begin{align}
\Xi_{1} &\equiv \frac{\widetilde{u}}{1+\sqrt{1-4 \widetilde{u}\widetilde{\overline{v}}}}  - \frac{\overline{u}}{1+\sqrt{1-4 \b{u}\b{\t{v}}}} + \left(\frac{\beta_{1}^{2}}{1+\sqrt{1-4u\b{v}}} - \frac{\beta_{2}^{2}}{1+\sqrt{1-4u\t{v}}}\right) u = 0, \label{eq:0.1a}\\
\Xi_{2} &\equiv \frac{\widetilde{v}}{1+\sqrt{1-4u\t{v}}} - \frac{\overline{v} }{1+\sqrt{1-4u\b{v}}}- \left(\frac{\beta_{1}^{2}}{1+\sqrt{1-4 \widetilde{u}\widetilde{\overline{v}}}} - \frac{\beta_{2}^{2}}{1+\sqrt{1-4 \b{u}\b{\t{v}}}}\right) \t{\b{v}} = 0.\label{eq:0.1b}
\end{align}
\end{subequations}
Here, the usual notation, $\widetilde{h}(m_{1},m_{2})=h(m_{1}+1,m_{2}), \overline{h}(m_{1},m_{2})=h(m_{1},m_{2}+1)$, is used, and $\beta_{1},\beta_{2}$ are the lattice parameters. With the help of the Miwa variables
\begin{equation}\label{Miwa}
x=-\frac{m_1}{\beta_1^2}-\frac{m_2}{\beta_2^2},~~
y=-\frac{m_1}{2\beta_1^4}-\frac{m_2}{2\beta_2^4},
\end{equation}
the continuum limit of equation \eqref{eq:0.1} leads to the unreduced DNLS system \eqref{eq:0.0} \cite{XCZ-2022}.

For continuous systems, the algebro-geometric method has proven to be a powerful tool for obtaining quasi-periodic solutions expressed in terms of Riemann theta functions. This approach has been successfully applied to a wide range of hierarchies (see \cite{Matveev} and the references therein). For the discrete settings, however, the development of algebro-geometric constructions is more delicate, requiring the interplay of discrete zero-curvature representations and integrable symplectic maps \cite{XCN,XCZ-2022,XZ,Zhao}. In Ref. \cite{XCZ-2022,XZ}, we have obtained some integrable properties for discrete DNLS equations. However, they have to rely on the investigations of the lattice potential modified Kadomtsev-Petviashvili equation, and the quasi-periodic solutions to equation \eqref{eq:0.1} haven't been calculated. Thus, it is necessary and meaningful to study the explicit quasi-periodic solutions for the fully discrete DNLS equation.

The paper is organized as follows.
In Section \ref{sec-2}, we construct a Lax pair for the discrete DNLS equation \eqref{eq:0.1}, which can be nonlinearized to get an integrable symplectic map.
In Section \ref{sec-3}, according to the theory of Riemann surface, the discrete potentials in terms of Riemann theta functions are derived.
Then in Section \ref{sec-4}, we obtain explicit formulae of quasi-periodic solutions to the discrete DNLS equation \eqref{eq:0.1} by the commutability of integrable symplectic maps.
Some solutions of genus one and genus two cases are depicted. In Section \ref{sec-5}, the conclusions are given.

\section{ The Lax pair and nonlinear integrable symplectic map}\label{sec-2}

In Ref. \cite{XCZ-2022}, the discrete DNLS equation can be derived by using an additional condition, which is automatically satisfied for the lattice potential modified
Kadomtsev--Petviashvili equation. Here, we prove that the discrete DNLS equation \eqref{eq:0.1} has a Lax pair as
\begin{subequations}\label{eq:1.2}
\begin{align}
&\widetilde{\chi}=D^{(\beta_1)}\chi=\left(
    \begin{array}{cc}
           \frac{2 \lambda^2 }{1+\sqrt{1-4u\widetilde{v}}}-\beta_1^2&\frac{2 \lambda u}{1+\sqrt{1-4u\widetilde{v}}}\\
           \frac{2 \lambda \widetilde{v}}{1+\sqrt{1-4u\widetilde{v}}}&1
    \end{array}\right)\chi, \\
&\b{\chi}=D^{(\beta_2)}\chi= \left(\begin{array}{cc}
           \frac{2 \lambda^2}{1+\sqrt{1-4u\b{v}}}-\beta_2^2&\frac{2\lambda  u}{1+\sqrt{1-4u\b{v}}}\\
           \frac{2\lambda \overline{v}}{1+\sqrt{1-4u\b{v}}}&1\end{array}\right)\chi.
\end{align}
\end{subequations}
In fact, we have
\begin{equation*}
\widetilde{D}^{(\beta_2)}D^{(\beta_1)}-\overline{D}^{(\beta_1)}D^{(\beta_2)}
=\left(\begin{array}{cc}
\lambda^4 S_1+\lambda^2 S_2 & \lambda^3 u S_1+2 \lambda \Xi_1 \\
\lambda^3 \overline{\widetilde{v}} S_1 +2 \lambda \Xi_2 & \lambda^2 u \overline{\widetilde{v}} S_1
\end{array}\right),
\end{equation*}
where
\begin{align*}
S_1 =&\frac{16}{4 \beta^2_1 u \overline{\widetilde{v}}- (1+\sqrt{1-4u\b{v}})(1+\sqrt{1-4\t{u}\widetilde{\b{v}}})} \left(\frac{\overline{\t{v}}}{1+\sqrt{1-4 \b{u}\b{\t{v}}}}   \Xi_1  + \frac{u}{1+\sqrt{1-4u\t{v}}}\Xi_2\right) ,\\
S_2 = &\frac{4\widetilde{v}}{1+\sqrt{1-4u\t{v}}}\Xi_1 +\frac{4 \overline{u}}{1+\sqrt{1-4\b{u}\b{\t{v}}}}\Xi_2 - \frac{\beta^2_1}{4} \left( 1+\sqrt{1-4u\t{v}} \right) \left( 1+\sqrt{1-4\b{u}\b{\t{v}} }\right)S_1.
\end{align*}

According to the nonlinearization procedure \cite{Cao1990},  the above Lax pair can be nonlinearized by using the Lax matrix \cite{CY}
\begin{equation}\label{eq:2.10}
L(\lambda)\doteq L(\lambda;p,q)
=\left(\begin{array}{cc}
\frac{1}{2}+Q_\lambda(A^2p,q)&-\lambda Q_\lambda(Ap,p)\\
\lambda Q_\lambda(Aq,q) &-\frac{1}{2}-Q_\lambda(A^2p,q)
\end{array}\right),
\end{equation}
where $Q_{\lambda}(\xi,\eta)=<(\lambda^{2}I-A^{2})\xi,\eta>, A\doteq{\rm diag}(\alpha_1,\cdots,\alpha_N)$ and $<\xi,\eta>=\Sigma_{j=1}^N\xi_j\eta_j$. From the Lax pair \eqref{eq:1.2}, we introduce
\begin{equation}\label{eq:D}
D^{(\beta)}(\lambda)\doteq D^{(\beta)}(\lambda;u,v)
=\left(
    \begin{array}{cc}
           \frac{2 \lambda^2 }{1+\sqrt{1-4u\widehat{v}}}-\beta^2&\frac{2 \lambda u}{1+\sqrt{1-4u\widehat{v}}}\\
           \frac{2 \lambda \widehat{v}}{1+\sqrt{1-4u\widehat{v}}}&1
    \end{array}\right).
\end{equation}
In order to investigate the discrete Lax equation, we calculate the discrete curvature matrix
\begin{equation}\begin{split}\label{eq:2.11}
\widehat{L}(\lambda) D^{(\beta)}(\lambda) -D^{(\beta)}(\lambda) L(\lambda)&=\left(\begin{array}{cc}
\frac{2\beta^{2}\widehat{v}}{1+\sqrt{1-4u\widehat{v}}} & \lambda \\
\frac{4\lambda \widehat{v}^{2}}{(1+\sqrt{1-4u\widehat{v}})^{2}} &\frac{2\widehat{v}}{1+\sqrt{1-4u\widehat{v}}}
\end{array}\right)P_{1}\\
&+\left(\begin{array}{cc}
\frac{2\beta^{2}u}{1+\sqrt{1-4u\widehat{v}}} & 0 \\
\frac{2\lambda}{(1+\sqrt{1-4u\widehat{v}})^{2}} &\frac{2u}{1+\sqrt{1-4u\widehat{v}}}
\end{array}\right)\frac{P_{2}}{\beta},
\end{split}\end{equation}
where
\begin{align}\label{eq:2.12}
P_{1}=\frac{2}{1+\sqrt{1-4u\widehat{v}}}(u+<Ap,p>),\ \ \
P_{2}=\beta(\widehat{v}-<A\widehat{q},\widehat{q}>).
\end{align}

\begin{proposition}\label{P-2-2}
 Under the constraint $u=-<Ap,p>,v=<Aq,q>$, the nonlinearized map $S_{\beta}: \mathbb{R}^{2N} \rightarrow \mathbb{R}^{2N},$
\begin{equation}\label{eq:2.14}
\begin{pmatrix} \widehat{p}_j \\ \widehat{q}_j \end{pmatrix} = \frac{1}{\sqrt{\alpha_{j}^{2}-\beta^{2}}}\left(
    \begin{array}{cc}
           \frac{2 \alpha_{j}^2 }{1+\sqrt{1-4u\widehat{v}}}-\beta^2&\frac{2 \alpha_{j} u}{1+\sqrt{1-4u\widehat{v}}}\\
           \frac{2 \alpha_{j} \widehat{v}}{1+\sqrt{1-4u\widehat{v}}}&1
    \end{array}\right) \begin{pmatrix} p_j \\ q_j \end{pmatrix}, \quad (j=1,\cdots, N),
\end{equation}
is integrable and symplectic, possessing the integrals given by coefficients of the polynomial $F(\lambda; p, q) = \det L(\lambda; p, q)$ in $\lambda$.
\end{proposition}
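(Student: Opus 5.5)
The plan is to follow the standard nonlinearization scheme (Cao's method), in three steps.

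\textbf{Step 1: well-definedness.} First I will check that the map \eqref{eq:2.14} is well defined. The right-hand side involves $\widehat{v}=<A\widehat{q},\widehat{q}>$, which itself depends on the image. From the second row, $\widehat q_j=(2\alpha_j\widehat v p_j/(1+\sqrt{1-4u\widehat v})+q_j)/\sqrt{\alpha_j^2-\beta^2}$. Substituting into $\widehat v=<A\widehat q,\widehat q>$ gives one scalar algebraic equation for $\widehat v$ in terms of $(p,q)$ and $u=-<Ap,p>$. Setting $w=2\widehat v/(1+\sqrt{1-4u\widehat v})$, so that $\widehat v=w(1-uw)$ from $uw^2-w+\widehat v=0$, this becomes a quadratic in $w$. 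I will choose the branch that tends to $<Aq,q>$-type behaviour, so $\widehat v$ and hence $S_\beta$ are determined. This is consistent with $P_2=0$ in \eqref{eq:2.12}; the constraint $u=-<Ap,p>$ gives $P_1=0$.

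\textbf{Step 2: discrete Lax equation and integrals.} With $P_1=P_2=0$, the identity \eqref{eq:2.11} reduces to $\widehat L(\lambda)D^{(\beta)}(\lambda)=D^{(\beta)}(\lambda)L(\lambda)$. I need to check that $\widehat L$ here really is $L(\lambda;\widehat p,\widehat q)$ with $\widehat p,\widehat q$ given by \eqref{eq:2.14}. The map is exactly the componentwise action of $D^{(\beta)}(\alpha_j)/\sqrt{\alpha_j^2-\beta^2}$, and $L$ is a sum of rank-type terms $\frac{1}{\lambda^2-\alpha_j^2}$ in $(p_j,q_j)$. So the curvature computation reduces to a residue comparison at $\lambda=\pm\alpha_j$, plus the behaviour at $\infty$, where the constraints enter. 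Taking determinants then gives $\det\widehat L=\det L$, since $\det D^{(\beta)}(\lambda)=\lambda^2\cdot 2(1-2u\widehat v/(1+\sqrt{\cdot}))/(1+\sqrt{\cdot})-\beta^2$. Simplifying $2(1-2u\widehat v/(1+s))/(1+s)$ with $s=\sqrt{1-4u\widehat v}$ gives $1$, so $\det D^{(\beta)}=\lambda^2-\beta^2\neq0$ generically and it cancels. Hence $F(\lambda;p,q)$ is invariant, and all its coefficients $F_k$ are integrals of $S_\beta$.

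\textbf{Step 3: symplecticity and involutivity.} For symplecticity I will show that $\sum_j d\widehat p_j\wedge d\widehat q_j=\sum_j dp_j\wedge dq_j$. Each $2\times2$ block has determinant $(\alpha_j^2-\beta^2)/(\alpha_j^2-\beta^2)=1$, but its entries depend on the scalars $u$ and $\widehat v$. The extra terms therefore collect into $du\wedge(\cdot)$ and $d\widehat v\wedge(\cdot)$ pieces. Using $du=-2\sum\alpha_jp_j\,dp_j$ and $d\widehat v=2\sum\alpha_j\widehat q_j\,d\widehat q_j$, these should combine into something like $d(\Phi(u,\widehat v))\wedge du$-type exact terms that vanish. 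A convenient route is to exhibit a generating function: the map is given by linear fractional relations with scalar coefficients, so I would write a generating function $G(p,\widehat q)$ of the form $\sum_j(\cdots)p_j\widehat q_j$ plus a function of $u$ and $\widehat v$.

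For Liouville integrability I will use the standard fact that for Lax matrices of the form \eqref{eq:2.10} the $r$-matrix structure gives $\{L(\lambda)\otimes,L(\mu)\}=[r(\lambda,\mu),L(\lambda)\otimes I]-[r^T,I\otimes L(\mu)]$. This yields $\{F(\lambda),F(\mu)\}=0$, so the $F_k$ are in involution. Functional independence of $N$ of them follows from their leading behaviour near a generic point, e.g. the residues $E_j$ of $F$ at $\lambda^2=\alpha_j^2$ are independent. The main obstacle is the symplecticity check (Step 3), because of the implicit dependence on $\widehat v$. I would try first to verify it via the generating function, differentiating $u\widehat v$ relations. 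If that fails, I would use the direct 2-form computation, grouping terms by $du$ and $d\widehat v$ and using the defining equation for $\widehat v$.
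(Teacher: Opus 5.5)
Your proposal is correct and follows essentially the paper's route. Invariance of $F$ comes from the reduced discrete Lax equation $\widehat{L}D^{(\beta)}=D^{(\beta)}L$, i.e.\ \eqref{eq:2.11} with $P_1=P_2=0$ and $\det D^{(\beta)}=\lambda^2-\beta^2$. Symplecticity comes from directly computing $\sum_j(\mathrm{d}\widehat p_j\wedge \mathrm{d}\widehat q_j-\mathrm{d}p_j\wedge \mathrm{d}q_j)$, which the paper records as $\frac12\mathrm{d}\Omega$, with $\Omega$ reducing to the exact form $\mathrm{d}\log(1+\sqrt{1-4u\widehat v})$ once $P_1=P_2=0$. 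Your well-definedness step and the $r$-matrix argument for involutivity are additions the paper leaves implicit.

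Your hedged Step 3 does close. Put $b=2u/(1+\sqrt{1-4u\widehat v})$ and $c=2\widehat v/(1+\sqrt{1-4u\widehat v})$, so that $D^{(\beta)}(\lambda)=\begin{pmatrix}1&\lambda b\\ 0&1\end{pmatrix}\begin{pmatrix}\lambda^2-\beta^2&0\\ \lambda c&1\end{pmatrix}$. The extra terms are then exactly $\frac12(\mathrm{d}b\wedge \mathrm{d}\widehat v-\mathrm{d}u\wedge \mathrm{d}c)$, and these cancel identically because $u=b/(1+bc)$ and $\widehat v=c/(1+bc)$.
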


\begin{proof}
Under this constraint, the invariance $\widehat{F}(\lambda; p, q) = F(\lambda; p, q)$ is got by equation \eqref{eq:2.11}. Moreover, the symplectic formula holds,
\begin{align}\label{eq:2.5}
\sum_{j=1}^{N} \left( \mathrm{d}\widehat{p}_{j} \wedge \mathrm{d}\widehat{q}_{j} - \mathrm{d} p_{j} \wedge \mathrm{d} q_{j} \right) = \frac{1}{2}  \mathrm{d} \Omega=0,
\end{align}
with
\begin{equation}\begin{split}\label{eq:2.13}
\Omega=&-\frac{2P_{2}}{\beta}\mathrm{d}\Big(\frac{u}{1+\sqrt{1-4u\widehat{v}}}\Big)-(1+\sqrt{1-4u\widehat{v}})\mathrm{d}\Big(\frac{1}{1+\sqrt{1-4u\widehat{v}}}\Big)\\
&+\frac{1+\sqrt{1-4u\widehat{v}}}{2}P_{1}\Big(\frac{-8\widehat{v}^{2}}{(1+\sqrt{1-4u\widehat{v}})^{2}}\mathrm{d}\big(\frac{u}{1+\sqrt{1-4u\widehat{v}}}\big)+2\mathrm{d}\big(\frac{\widehat{v}}{1+\sqrt{1-4u\widehat{v}}}\big)\Big).
\end{split}\end{equation}
\end{proof}

\section{Riemann theta function expressions of discrete potentials }\label{sec-3}

From the nonlinear integrable symplectic map $S_{\beta}$, we define a discrete phase flow by iteration
\begin{equation}\label{3.1}
(p(m),\,q(m))=(S_{\beta})^m(p(0),q(0))=S_{\beta}^m(p(0),q(0)),
\end{equation}
where $(p(0),q(0))$ is any given point in $\mathbb{R}^{2N}$. On the $S^{m}_{\beta}$-flow, we reformulate the discrete potentials as
\begin{equation}\label{3.2}
u(m)=-<Ap(m),p(m)>,\ \ v(m)=<Aq(m),q(m)>.
\end{equation}
By equation \eqref{eq:2.11} we have the commutative relation
\begin{equation}\label{eq:3.3}
L_{m+1}(\lambda)D^{(\beta)}_m(\lambda)=D^{(\beta)}_m(\lambda)L_m(\lambda),
\end{equation}
with $L_m(\lambda)=L(\lambda; p(m),\,q(m)), D^{(\beta)}_m(\lambda)=D^{(\beta)}(\lambda;u(m), v(m))$.

\noindent Consider the linear operator $L_m(\lambda)$ with two distinct eigenvalues $\pm \rho_{\lambda}=\pm\sqrt{-F(\lambda)}$.
The corresponding eigenvectors are denoted by $h_{\pm}(m,\lambda)=(h_{\pm}^{(1)}(m,\lambda), h_{\pm}^{(2)}(m,\lambda))^T$. When $m=0$ we choose the normalized eigenvector
\begin{subequations}
\begin{equation}\label{eq:3.4a}
h_{\pm}(0,\lambda)=(c_\lambda^\pm,1)^T,
\end{equation}
where
\begin{equation}\label{eq:3.4b}
c_\lambda^\pm=\frac{L_0^{11}(\lambda)\pm
\rho_{\lambda}}{L_0^{21}(\lambda)}
=\frac{-L_0^{12}(\lambda)}{L_0^{11}(\lambda)\mp \rho_{\lambda}}.
\end{equation}
\end{subequations}
Let
\begin{equation}\label{eq:3.5}
M(m,\lambda)=D^{(\beta)}_{m-1}(\lambda)D^{(\beta)}_{m-2}(\lambda)\cdots D^{(\beta)}_0(\lambda),\ \ M(0,\lambda)=I,
\end{equation}
where $I$ is the usual unit matrix. Then we obtain
\begin{subequations}
\begin{equation}\label{eq:3.6a}
h_{\pm}(m,\lambda)=M(m,\lambda){c_\lambda^\pm\choose 1},
\end{equation}
\begin{equation}\label{eq:3.6b}
h_{\pm}(m+1,\lambda)=D_m(\lambda)h_{\pm}(m,\lambda).
\end{equation}
\end{subequations}

Our aim is to investigate these eigenvectors $h_{\pm}(m,\lambda)$ under the theory of Riemann surface \cite{Farkas,Griffiths,Mumford}. Now we consider a hyperelliptic curve
\begin{equation}\label{R-curve}
\mathcal R:~~ \xi^2=R(\zeta)=\prod_{j=1}^N(\zeta-\alpha_j^2)(\zeta-\lambda_j^2),\ \ \zeta=\lambda^2,
\end{equation}
where $\lambda_j^2$ are the zeros of $F(\lambda)$. The genus of $\mathcal R$ is $g=N-1$. For values of $\zeta$ not corresponding to a branch point, there are two points $\mathfrak p_+(\zeta),\mathfrak p_-(\zeta)$ on $\mathcal R$,
\begin{equation}\label{eq:3.8}
\mathfrak p_+(\zeta)=\big(\zeta,\,\xi=\sqrt{R(\zeta)}\big),\quad ~
\mathfrak p_-(\zeta)=\big(\zeta,\,\xi=-\sqrt{R(\zeta)}\big),
\end{equation}
and in particular, for the infinity we denote the two corresponding points on $\mathcal R$ by $\infty_+,\,\infty_-$. Thus we obtain four meromorphic functions on $\mathcal{R}$,
\begin{equation}\label{eq:3.9}
\begin{split}
&\mathfrak{h}^{(1)}\big(m,\mathfrak p_+(\lambda^2)\big)=\lambda h_+^{(1)}(m,\lambda),\quad
\mathfrak{h}^{(1)}\big(m, \mathfrak p_-(\lambda^2)\big)=\lambda h_-^{(1)}(m,\lambda),\\
&\mathfrak{h}^{(2)}\big(m,\mathfrak p_+(\lambda^2)\big)=h_+^{(2)}(m,\lambda),\quad
\mathfrak{h}^{(2)}\big(m, \mathfrak p_-(\lambda^2)\big)=h_-^{(2)}(m,\lambda).
\end{split}\end{equation}

\begin{proposition}\label{P-3-1}
The divisors of $\mathfrak h^{(1)}(m,\mathfrak p)$,\,$\mathfrak h^{(2)}(m,\mathfrak p)$ on $\mathcal{R}$ are written as
\begin{subequations}\label{eq:3.10}
\begin{align}
&\mathcal{D}(\mathfrak h^{(1)}(m,\mathfrak p))=\sum_{j=1}^g\Big(\mathfrak p\big(\mu_j^2(m)\big)-\mathfrak
p\big(\nu_j^2(0)\big)\Big)+\{\mathfrak o_-\}+m\{\mathfrak p(\beta^2)\}-(m+1)\{\infty_+\},\label{eq:3.10a}\\
&\mathcal{D}(\mathfrak h^{(2)}(m,\mathfrak p))=\sum_{j=1}^g\Big(\mathfrak p\big(\nu_j^2(m)\big)-\mathfrak
p\big(\nu_j^2(0)\big)\Big)+m\{\mathfrak p(\beta^2)\}-m\{\infty_+\},\label{eq:3.10b}
\end{align}\end{subequations}
where $\mathfrak o_-=\big(\zeta=0,\,\xi=-\sqrt{R(0)}\big)$, and $\mu_j^{2},\,\nu_j^{2}$ are roots of $L^{12}(\lambda)$ and $L^{21}(\lambda)$ respectively.
\end{proposition}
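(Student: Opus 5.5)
The approach is to compute the symmetric products $\mathfrak h^{(k)}(m,\mathfrak p_+)\,\mathfrak h^{(k)}(m,\mathfrak p_-)$ as rational functions of $\zeta=\lambda^2$. This locates the zeros and poles in $\zeta$. Separate local arguments then decide on which sheet each zero or pole lies.

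\textbf{Step 1: determinants.} Write $s=\sqrt{1-4u\widehat v}$. Since $4u\widehat v=1-s^2$, a direct computation from \eqref{eq:D} gives
\begin{equation*}
\det D^{(\beta)}(\lambda)=\frac{2\lambda^2}{1+s}-\beta^2-\frac{\lambda^2(1-s)}{1+s}=\lambda^2-\beta^2 .
\end{equation*}
Hence $\det M(m,\lambda)=(\lambda^2-\beta^2)^m$. From \eqref{eq:3.6a} and \eqref{eq:3.4b} we then get
\begin{equation*}
\det\big(h_+(m,\lambda),h_-(m,\lambda)\big)=(\lambda^2-\beta^2)^m(c^+_\lambda-c^-_\lambda)=\frac{2\rho_\lambda(\lambda^2-\beta^2)^m}{L_0^{21}(\lambda)} .
\end{equation*}

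\textbf{Step 2: the product formulas.} By \eqref{eq:3.3}, $h_\pm(m,\lambda)$ is an eigenvector of $L_m(\lambda)$ with eigenvalue $\pm\rho_\lambda$. Therefore
\begin{equation*}
\frac{h_\pm^{(1)}(m)}{h_\pm^{(2)}(m)}=\frac{L_m^{11}\pm\rho_\lambda}{L_m^{21}}=\frac{-L_m^{12}}{L_m^{11}\mp\rho_\lambda}.
\end{equation*}
Substituting these ratios into the determinant of Step 1 yields
\begin{equation*}
h_+^{(2)}h_-^{(2)}=(\lambda^2-\beta^2)^m\frac{L_m^{21}(\lambda)}{L_0^{21}(\lambda)},\qquad
h_+^{(1)}h_-^{(1)}=-(\lambda^2-\beta^2)^m\frac{L_m^{12}(\lambda)}{L_0^{21}(\lambda)}.
\end{equation*}
Now read $Q_\lambda$ as the resolvent sum $\sum_j \xi_j\eta_j/(\lambda^2-\alpha_j^2)$. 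Then
\begin{equation*}
L^{12}_m=-\lambda\,\frac{\langle Ap,p\rangle\prod_{j=1}^g(\zeta-\mu_j^2(m))}{\prod_{j=1}^N(\zeta-\alpha_j^2)},
\end{equation*}
and similarly for $L^{21}_m$ with $\langle Aq,q\rangle$ and the $\nu_j^2(m)$. The factors $\prod_j(\zeta-\alpha_j^2)$ cancel in the ratios. We are left with
\begin{equation*}
\mathfrak h^{(2)}(m,\mathfrak p_+)\,\mathfrak h^{(2)}(m,\mathfrak p_-)=\frac{v(m)}{v(0)}(\zeta-\beta^2)^m\frac{\prod_j(\zeta-\nu_j^2(m))}{\prod_j(\zeta-\nu_j^2(0))},
\end{equation*}
\begin{equation*}
\mathfrak h^{(1)}(m,\mathfrak p_+)\,\mathfrak h^{(1)}(m,\mathfrak p_-)=\frac{u(m)}{v(0)}\,\zeta\,(\zeta-\beta^2)^m\frac{\prod_j(\zeta-\mu_j^2(m))}{\prod_j(\zeta-\nu_j^2(0))}.
\end{equation*}
In the second product, the extra factor $\zeta$ comes from $\lambda^2\cdot\lambda/\lambda$. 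Thus in $\zeta$ the zeros are $\mu_j^2(m)$ (resp. $\nu_j^2(m)$), the point $0$ in the first case, and $\beta^2$ with multiplicity $m$; the poles are the $\nu_j^2(0)$. Comparing degrees, the pole orders at infinity total $m+1$ and $m$ respectively.

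\textbf{Step 3: sheet assignment.} This is the main obstacle. Each zero or pole of the symmetric product must be placed on exactly one of the two points over the corresponding $\zeta$.

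At $\zeta=\nu_j^2(0)$ we have $L_0^{21}=0$, so $\rho=\pm L_0^{11}$. By \eqref{eq:3.4b}, exactly one of $c^\pm_\lambda$ blows up while the other stays finite. So the pole sits on one sheet only, which is the point denoted $\mathfrak p(\nu_j^2(0))$. The same argument applied at time $m$ places the zeros $\mathfrak p(\nu_j^2(m))$ and $\mathfrak p(\mu_j^2(m))$.

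At $\zeta=\beta^2$, the matrix $D^{(\beta)}_{m-1}(\beta)$ has rank one. I will argue that only one branch of $h$ vanishes there, to order $m$, and fix it by a local expansion $\lambda=\beta+\epsilon$.

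At infinity I use the asymptotics $L^{11}\to\tfrac12$, $\rho\to\tfrac12$ and $L^{21}\sim v/\lambda$. These give $c^+_\lambda\sim\lambda/v$ and $c^-_\lambda\sim u/\lambda$. Combined with $M(m,\lambda)\sim\lambda^{2m}\cdot(\text{const})$ in the $(1,1)$ slot, this shows all the pole order sits at $\infty_+$ and $\mathfrak h$ is regular and nonzero at $\infty_-$.

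Finally, at $\zeta=0$ the factor $\lambda$ in $\mathfrak h^{(1)}$ vanishes on both branches, but $\lambda h_+^{(1)}$ carries $c^+_\lambda\sim 1/\lambda$-type behaviour on one sheet. Checking with $L^{21}_0(\lambda)\to0$ as $\lambda\to0$ should leave the simple zero exactly at $\mathfrak o_-$.

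I expect the $\beta^2$ and $\zeta=0$ sign checks to require the most care. They will be settled by explicit first-order expansions of $c^\pm_\lambda$ and of $D^{(\beta)}$.
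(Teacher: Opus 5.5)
Your proposal follows the paper's proof. The products $\mathfrak h^{(k)}(m,\mathfrak p_+)\,\mathfrak h^{(k)}(m,\mathfrak p_-)$, obtained from $\det D^{(\beta)}=\lambda^2-\beta^2$ and the eigenvector relations, are exactly \eqref{eq:3.11}, and your sheet placement at $\infty_\pm$ and $\zeta=0$ is the content of \eqref{eq:3.12}--\eqref{eq:3.13}; the only slips are harmless signs (since $\langle Ap,p\rangle=-u$, the $\mathfrak h^{(1)}$ product carries $-u(m)/v(0)$ and $c^-_\lambda\sim-u/\lambda$).

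The paper is equally silent on why all $m$ zeros over $\zeta=\beta^2$ sit at a single point. To close that step, rewrite the constraint $\widehat v=\langle A\widehat q,\widehat q\rangle$ in terms of $w=2\beta\widehat v/(1+\sqrt{1-4u\widehat v})$, which gives $L_k^{12}(\beta)w^2-2L_k^{11}(\beta)w-L_k^{21}(\beta)=0$. With the root of $S_\beta$ fixed as $w=(L_k^{11}(\beta)-\sigma\rho_\beta)/L_k^{12}(\beta)$ (where $\rho_\beta$ is constant along the orbit because $F$ is invariant), the kernel $\mathrm{span}\,(1,-w)^T$ of the rank-one matrix $D^{(\beta)}_k(\beta)$ is always the $\sigma\rho_\beta$-eigenline of $L_k(\beta)$. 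An induction on $k$, rather than a single local expansion of $D^{(\beta)}_{m-1}$, then puts all $m$ zeros on that one sheet.
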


\begin{proof}
From \eqref{eq:3.3}, \eqref{eq:3.4b}, \eqref{eq:3.5}, \eqref{eq:3.6a} and \eqref{eq:3.9} we have
\begin{subequations}\label{eq:3.11}
\begin{align}
&\mathfrak{h}^{(1)}\big(m,\mathfrak p_+(\lambda^2)\big)\cdot
\mathfrak{h}^{(1)}\big(m, \mathfrak p_-(\lambda^2)\big)
=-\zeta(\zeta-\beta^2)^m\frac{u(m)}{v(0)}\prod_{j=1}^{g}\frac{\zeta-\mu_j^2(m)}{\zeta-\nu_j^2(0)},
\label{eq:3.11a}\\
&\mathfrak{h}^{(2)}\big(m,\mathfrak p_+(\lambda^2)\big)\cdot
\mathfrak{h}^{(2)}\big(m, \mathfrak p_-(\lambda^2)\big)
=(\zeta-\beta^2)^m\frac{v(m)}{v(0)}\prod_{j=1}^{g}\frac{\zeta-\nu_j^2(m)}{\zeta-\nu_j^2(0)}.\label{eq:3.11b}
\end{align}
\end{subequations}
As $\zeta\sim\infty$, we obtain the following asymptotic behaviors:
\begin{subequations}\label{eq:3.12}
\begin{align}
&\mathfrak{h}^{(1)}\big(m,\mathfrak p_+(\lambda^2)\big)
=\frac{2^m}{v(0) \big(1+\sqrt{1-4u(0)v(1)}\big) \cdots \big(1+\sqrt{1-4u(m-1) v(m)}\big)} \zeta^{m+1}\big(1+O(\zeta^{-1})\big),\label{eq:3.12a}\\
&\mathfrak{h}^{(1)}\big(m, \mathfrak p_-(\lambda^2)\big)
=-\frac{u(m)}{2^m} {\big(1+\sqrt{1-4u(0) v(1)}\big) \cdots \big(1+\sqrt{1-4u(m-1) v(m)}\big)} \big(1+O(\zeta^{-1})\big),\label{eq:3.12b}\\
&\mathfrak{h}^{(2)}\big(m,\mathfrak p_+(\lambda^2)\big)
=\frac{2^m v(m)}{v(0) \big(1+\sqrt{1-4u(0) v(1)}\big) \cdots \big(1+\sqrt{1-4u(m-1) v(m)}\big)} \zeta^{m}\big(1+O(\zeta^{-1})\big),\label{eq:3.12c}\\
&\mathfrak{h}^{(2)}\big(m, \mathfrak p_-(\lambda^2)\big)
=\frac{1}{2^m} {\big(1+\sqrt{1-4u(0) v(1)}\big) \cdots \big(1+\sqrt{1-4u(m-1) v(m)}\big)} \big(1+O(\zeta^{-1})\big).\label{eq:3.12d}
\end{align}\end{subequations}
by using \eqref{eq:3.4b}, \eqref{eq:3.5}, \eqref{eq:3.6a}. Similarly, we get ($\zeta\sim 0$)
\begin{align}\label{eq:3.13}
\mathfrak{h}^{(1)}\big(m, \mathfrak p_-(\lambda^2)\big)
=\zeta\frac{<A^{-1}q(0),q(0)>
u(m)}{(1-2<p(0),q(0)>)v(0)}\left(\prod_{j=1}^{g}\frac{\mu_j^2(m)}{\nu_j^2(0)}\right)\big(1+O(\zeta)\big).
\end{align}
Thus \eqref{eq:3.10} is obtained by the definition of divisor.
\end{proof}
\noindent Let us introduce the Abel-Jacobi variables
\begin{equation}\label{eq:3.14}
\vec{\psi}(m)=\mathcal{A}\Big({\sum}_{j=1}^g\mathfrak p\big(\mu_j^2(m)\big)\Big), \quad
\vec{\phi}(m)=\mathcal{A}\Big({\sum}_{j=1}^g\mathfrak p\big(\nu_j^2(m)\big)\Big).
\end{equation}
by the Abel map $\mathcal{A}$ \cite{Farkas,Griffiths,Mumford}. In the Jacobi variety we have
\begin{subequations}\label{eq:3.15}
\begin{align}
\label{eq:3.15a}
&\vec\psi(m)\equiv\vec\phi(0)+m\vec\Omega_\beta+\vec\Omega_0,\quad({\rm mod}\,\mathcal T),\\
\label{eq:3.15b}
&\vec\phi(m)\equiv\vec\phi(0)+m\vec\Omega_\beta,\quad({\rm mod}\,\mathcal T),\\
\label{eq:3.15c}
&\vec\Omega_\beta=\int_{\mathfrak p(\beta^2)}^{\infty_+}\vec\omega,\quad
\vec\Omega_0=\int_{\mathfrak o_-}^{\infty_+}\vec\omega,
\end{align}\end{subequations}
where $\mathcal T$ is a period lattice and $\vec\omega$ is the normalized basis of holomorphic differentials of $\mathcal R$. Consequently we obtain the Riemann theta function expressions \cite{Farkas,Griffiths,Mumford}:
\begin{subequations}\label{eq:3.16}
\begin{align}\label{eq:3.16a}
&\mathfrak h^{(1)}(m,\mathfrak p)=C_m^{(1)}
\frac{\theta(-\mathcal A(\mathfrak p)+\vec\psi(m)+\vec K;B)}
{\theta(-\mathcal A(\mathfrak p)+\vec\phi(0)+\vec K;B)}
\exp\int_{\mathfrak p_0}^{\mathfrak p}(m\,\omega[\mathfrak p(\beta^2),\infty_+]
+\omega[\mathfrak o_-,\infty_+]),\\
\label{eq:3.16b}
&\mathfrak h^{(2)}(m,\mathfrak p)=C_m^{(2)}
\frac{\theta(-\mathcal A(\mathfrak p) +\vec\phi(m)+\vec K;B)}
{\theta(-\mathcal A(\mathfrak p)+\vec\phi(0)+\vec K;B)}
\exp\int_{\mathfrak p_0}^{\mathfrak p}m\,\omega[\mathfrak p(\beta^2),\infty_+],
\end{align}
\end{subequations}
where $C_m^{(1)}$ and $C_m^{(2)}$ are constant factors, $\omega$ is the dipole and $B$ is a symmetric matrix with positive definite imaginary part. Moreover the elements of the Riemann constant vector $\vec K=(\vec K_{1},\ldots,\vec K_{g})^{T}$ is
\begin{align}
\vec K_{j}=\displaystyle\frac{1+B_{jj}}{2}-\sum^{g}_{\begin{subarray}{c}k=1\\ k\neq j \end{subarray}}\int_{a_{k}}\mathcal A_{j}(\mathfrak p)\ \omega_{k},
\ \ j=1,\ldots,g,
\end{align}
with $\mathcal{A}_{j}$ being the $j$-th component of the Abel map $\mathcal A$.
\begin{proposition}\label{P-3-2}
The Riemann theta function expressions for discrete potentials $u(m)$ and $v(m)$ are
\begin{subequations}\label{eq:3.20}
\begin{equation}\label{eq:3.20a}
\begin{split}
    u(m) &= u(0)  \frac{\theta^2(\vec\psi(0)+\vec K + \vec\eta_{\mathfrak o_+};B) \cdot \theta(\vec\psi(m)+\vec K+\vec\eta_{\infty_+};B)\cdot \theta(\vec\psi(m)+\vec K+\vec\eta_{\infty_-};B)}
{\theta^2(\vec\psi(m)+\vec K + \vec\eta_{\mathfrak o_+};B) \cdot \theta(\vec\psi(0)+\vec K+\vec\eta_{\infty_+};B)\cdot \theta(\vec\psi(0)+\vec K+\vec\eta_{\infty_-};B)}\\& \quad
  \cdot (\beta^{4}\gamma_{\beta}^+)^m \cdot \exp \big(\int_{\mathfrak o_+}^{\infty_-} m\omega[\mathfrak p(\beta^2),\infty_+] + \int_{\mathfrak o_+}^{\mathfrak{p}_0} m \omega[\mathfrak p(\beta^2),\infty_+]\big),
\end{split}
\end{equation}
\begin{equation}\label{eq:3.20b}
\begin{split}
v(m) &=  v(0)\frac{\theta^2(\vec\phi(0)+\vec K + \vec\eta_{\mathfrak o_-};B) \cdot \theta(\vec\phi(m)+\vec K+\vec\eta_{\infty_-};B)\cdot \theta(\vec\phi(m)+\vec K+\vec\eta_{\infty_+};B)}
{\theta^2(\vec\phi(m)+\vec K + \vec\eta_{\mathfrak o_-};B) \cdot \theta(\vec\phi(0)+\vec K+\vec\eta_{\infty_-};B)\cdot \theta(\vec\phi(0)+\vec K+\vec\eta_{\infty_+};B)}\\
& \quad \cdot (\gamma_{\beta}^+)^{m} \cdot \exp \big(\int_{\mathfrak o_-}^{\infty_-} m\omega[\mathfrak p(\beta^2),\infty_+] +\int_{\mathfrak o_-}^{\mathfrak{p}_0} m\omega[\mathfrak p(\beta^2),\infty_+]\big) ,
\end{split}
\end{equation}
\end{subequations}
where $\vec\eta_{\infty_\pm}=-\mathcal A(\infty_\pm), \vec \eta_{\mathfrak {o_\pm}}=-\mathcal A(\mathfrak {o_\pm})$ and
\begin{align}\label{eq:3.21A}
&\gamma_{\beta}^{+}=\lim_{\mathfrak p\to \infty_+}\frac{1}{\zeta}
\exp\int_{\mathfrak p_0}^{\mathfrak p} \omega[\mathfrak p(\beta^2),\infty_+],\\
\label{eq:3.21B}
&\omega[\mathfrak p(\beta^2),\infty_+]=
\left(\zeta^{g}+\frac{\xi+\sqrt{R(\beta^{2})}}{\zeta-\beta^{2}}\right)
\frac{\mathrm{d}\zeta}{2\sqrt{R(\zeta)}}.
\end{align}
\end{proposition}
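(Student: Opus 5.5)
The idea is to read off $u(m)$ and $v(m)$ from the asymptotics of the Baker--Akhiezer type functions $\mathfrak h^{(1)}$, $\mathfrak h^{(2)}$ at the special points $\infty_\pm$ and $\mathfrak o_\pm$. We then substitute the theta-function representations \eqref{eq:3.16} and eliminate the unknown constants $C_m^{(1)},C_m^{(2)}$ by taking suitable ratios.

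For $v(m)$ we use \eqref{eq:3.12c} and \eqref{eq:3.12d}. Their product divided by $\zeta^m$, evaluated as $\zeta\to\infty$, equals $v(m)/v(0)$, which is consistent with \eqref{eq:3.11b}. On the theta side, $\mathfrak h^{(2)}(m,\mathfrak p)/\zeta^{m}$ at $\infty_+$ gives
\[
C_m^{(2)}\,\frac{\theta(\vec\phi(m)+\vec K+\vec\eta_{\infty_+})}{\theta(\vec\phi(0)+\vec K+\vec\eta_{\infty_+})}\,(\gamma_\beta^+)^m .
\]
Here $\gamma_\beta^+$ is defined by \eqref{eq:3.21A}, and the dipole $\omega[\mathfrak p(\beta^2),\infty_+]$ in \eqref{eq:3.21B} has a simple pole at $\infty_+$ in the local parameter, which produces exactly the factor $\zeta$. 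At $\infty_-$, $\mathfrak h^{(2)}$ is regular and gives $C_m^{(2)}$ times the analogous theta ratio with $\vec\eta_{\infty_-}$, times $\exp\int_{\mathfrak p_0}^{\infty_-}m\,\omega$. This introduces $(C_m^{(2)})^2$, which we must eliminate. For this we evaluate $\mathfrak h^{(2)}$ at $\mathfrak o_-$ (equivalently $\zeta=0$), where $\mathfrak h^{(2)}(m,\mathfrak o_\pm)=h_\pm^{(2)}(m,0)$. Since $D_m^{(\beta)}(0)$ is lower-triangular-free, namely $\mathrm{diag}(-\beta^2,1)$, we get $h^{(2)}_-(m,0)=1$ directly from \eqref{eq:3.6a}. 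This yields $C_m^{(2)}$ as the inverse of the theta ratio at $\vec\eta_{\mathfrak o_-}$ times $\exp\big(-\int_{\mathfrak p_0}^{\mathfrak o_-}m\,\omega\big)$. Substituting this (squared) into the product formula yields \eqref{eq:3.20b}, including the squared theta factors at $\mathfrak o_-$ and the combined exponential $\int_{\mathfrak o_-}^{\infty_-}+\int_{\mathfrak o_-}^{\mathfrak p_0}$, after rewriting $-2\int_{\mathfrak p_0}^{\mathfrak o_-}+\int_{\mathfrak p_0}^{\infty_-}$.

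For $u(m)$ we proceed in the same way with $\mathfrak h^{(1)}$, using \eqref{eq:3.12a}, \eqref{eq:3.12b} and \eqref{eq:3.11a}. The product of the leading coefficients at $\infty_\pm$ gives $-u(m)/v(0)$. Normalizing $\mathfrak h^{(1)}$ at $\infty_+$ by $\zeta^{m+1}$ produces $(\gamma_\beta^+)^m$ together with an $m$-independent factor coming from $\omega[\mathfrak o_-,\infty_+]$; that factor cancels when we divide the formula at step $m$ by the one at step $0$. The constant $C_m^{(1)}$ is fixed at the point $\mathfrak o_+$, where $\mathfrak h^{(1)}$ is regular and nonzero; we have $h_+^{(1)}(m,0)=(-\beta^2)^m c_0^+$, since $D_m^{(\beta)}(0)=\mathrm{diag}(-\beta^2,1)$. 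The factor $\lambda$ in \eqref{eq:3.9} must be treated through the local parameter at the branch-free point $\zeta=0$. Squaring the factor $(-\beta^2)^m$ gives the $\beta^{4m}$ in $(\beta^4\gamma^+_\beta)^m$. The shifted divisor then gives $\vec\psi(m)$ in place of $\vec\phi(m)$, and the theta factors $\theta(\vec\phi(0)+\ldots)$ in the denominators of \eqref{eq:3.16a} cancel between step $m$ and step $0$. Dividing by the $m=0$ identity replaces these by $\vec\psi(0)$ terms, giving \eqref{eq:3.20a}.

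The main obstacle I expect is bookkeeping at $\zeta=0$ and at $\infty_+$. This means choosing the local coordinates consistently so that the $\lambda$ versus $\zeta$ factors and the residue normalization of the dipole produce exactly $\gamma_\beta^+$ and $\beta^4$. It also means tracking the $m$-independent constants $c_0^\pm$ and $\omega[\mathfrak o_-,\infty_+]$ so that they cancel against the $m=0$ values $u(0)$ and $v(0)$. I would handle this by always forming ratios at step $m$ over step $0$ before evaluating limits.
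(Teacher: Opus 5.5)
Your proposal is correct and follows essentially the paper's route. You fix $C_m^{(1)}$ from $\mathfrak o_+$, where $\mathfrak h^{(1)}$ gains a factor $-\beta^2$ per step, and $C_m^{(2)}$ from $\mathfrak o_-$, where $\mathfrak h^{(2)}(m,\mathfrak o_-)=1$ for all $m$; you then multiply the leading coefficients at $\infty_\pm$ from \eqref{eq:3.12} and divide by the $m=0$ identity. The paper does the same, only phrasing the first step as the recursions \eqref{eq:3.18}--\eqref{eq:3.19} for $C_{m+1}/C_m$.

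One small correction: since $L_0^{21}(0)=0$, $c_\lambda^+$ has a simple pole at $\mathfrak o_+$, so $h_+^{(1)}(m,0)=(-\beta^2)^m c_0^+$ is not meaningful as written. The right statement is $\mathfrak h^{(1)}(m,\mathfrak o_+)=\lim_{\lambda\to0}\lambda h_+^{(1)}(m,\lambda)=(-\beta^2)^m\,\mathfrak h^{(1)}(0,\mathfrak o_+)$, which is exactly \eqref{eq:3.17}.
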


\begin{proof}
By using equation \eqref{eq:3.6b} we get
\begin{equation}\label{eq:3.17}
\mathfrak h^{(1)}(m+1,\mathfrak o_+)=- \beta^2 \mathfrak h^{(1)}(m,\mathfrak o_+),
\end{equation}
at the point $\mathfrak o_+ = (\zeta=0, \xi= \sqrt{R(0)})$. Substituting \eqref{eq:3.16a} into \eqref{eq:3.17}, we obtain
\begin{equation}\label{eq:3.18}
\frac{C_{m+1}^{(1)}}{C_m^{(1)}}=-\beta^2
\frac{\theta(\vec\psi(m)+\vec K + \vec\eta_{\mathfrak o_+};B)}
{\theta(\vec\psi(m+1)+\vec K + \vec\eta_{\mathfrak o_+};B)} \cdot
\exp\int_{\mathfrak{o}_{+}}^{\mathfrak p_0}\omega[\mathfrak p(\beta^2),\infty_+].
\end{equation}
Similarly, by \eqref{eq:3.6b} and \eqref{eq:3.16b} we arrive at
\begin{equation}\label{eq:3.19}
\frac{C_{m+1}^{(2)}}{C_m^{(2)}}=
\frac
{\theta(\vec\phi(m)+\vec K+\vec\eta_{\mathfrak o_-};B)}{\theta(\vec\phi(m+1)+\vec K +\vec\eta_{\mathfrak o_-};B)} \cdot
\exp\int_{\mathfrak o_-}^{\mathfrak p_0}\omega[\mathfrak p(\beta^2),\infty_+].
\end{equation}
Hence from \eqref{eq:3.12} and \eqref{eq:3.16} we deduce \eqref{eq:3.20}.
\end{proof}

\section{Quasi-periodic solutions to the discrete DNLS equation }\label{sec-4}

Applying the theory in Section \ref{sec-2} to the parameters $\beta=\beta_1,\,\beta_2$ respectively, we get two integrable maps $S_{\beta_1},\,S_{\beta_2}$ which possess the same Liouville
set of integrals. Thus, the corresponding phase flows $S_{\beta_{1}}^{m_{1}}$ and $S_{\beta_{2}}^{m_{2}}$ commute \cite{Arnold,Bruschi,Veselov}. This guarantees $\widetilde{D}^{(\beta_2)}D^{(\beta_1)}=\overline{D}^{(\beta_1)}D^{(\beta_2)}$. Besides, the function $\big(p(m_{1},m_{2}),q(m_{1},m_{2})\big)=S^{m_{1}}_{\beta_{1}}S_{\beta_{2}}^{m_{2}}(p_{0},q_{0})$ is well-defined and the evolution of Abel-Jacobi variables along the above two discrete flows gives
\begin{equation}\label{eq:0.01}
\begin{split}
&\vec{\phi}(m_{1},m_{2})= \vec{\phi}(0,0)+m_{1}\vec{\Omega}_{\beta_{1}} +m_{2}\vec{\Omega}_{\beta_{2}},\\
&\vec{\psi}(m_{1},m_{2})= \vec{\phi}(0,0)+m_{1}\vec{\Omega}_{\beta_{1}} +m_{2}\vec{\Omega}_{\beta_{2}}+\vec{\Omega}_{0}.
\end{split}
\end{equation}
As a consequence, the discrete DNLS equation \eqref{eq:0.1} admits the quasi-periodic solutions in terms of Riemann theta functions:

\begin{subequations}\label{eq:4.1}
\begin{align}
u(m_1,m_2)=&u(0,0)\frac{\theta(\vec\phi(0,0)+m_1\vec\Omega_{\beta_1}+m_2\vec\Omega_{\beta_2}+\vec\Omega_0
+\vec K+\vec\eta_{\infty_+};B)\cdot\theta^2(\vec\phi(0,0)+\vec\Omega_0
+\vec K+\vec\eta_{\mathfrak{o}_+};B)}
{\theta(\vec\phi(0,0)+\vec\Omega_0
+\vec K+\vec\eta_{\infty_+};B)\cdot\theta^2(\vec\phi(0,0)+m_1\vec\Omega_{\beta_1}+m_2\vec\Omega_{\beta_2}+\vec\Omega_0
    +\vec K+\vec\eta_{\mathfrak{o}_+};B)}\nonumber
\\
&\cdot\frac{\theta(\vec\phi(0,0)+m_1\vec\Omega_{\beta_1}+m_2\vec\Omega_{\beta_2}+\vec\Omega_0
+\vec K+\vec\eta_{\infty_-};B)}{\theta(\vec\phi(0,0)+\vec\Omega_0
+\vec K+\vec\eta_{\infty_-};B)}\cdot (\beta_1^{4}\gamma_{\beta_1}^+)^{m_1} \cdot (\beta_2^{4}\gamma_{\beta_2}^+)^{m_2}  \nonumber
\\
&\cdot \exp\Big[ \int_{\mathfrak o_+}^{\mathfrak{p}_0} ( m_1\omega[\mathfrak p(\beta_1^2),\infty_+]+m_2\omega[\mathfrak p(\beta_2^2),\infty_+])+\int_{\mathfrak o_+}^{\infty_-}( m_1 \omega[\mathfrak p(\beta_1^2),\infty_+]+m_2 \omega[\mathfrak p(\beta_2^2),\infty_+])\Big] ,\label{eq:4.1a}
\\
v (m_1,m_2 )=& v(0,0)\frac{\theta(\vec\phi(0,0)+m_1\vec\Omega_{\beta_1}+m_2\vec\Omega_{\beta_2}+\vec K+\vec\eta_{\infty_-};B)\cdot\theta^2(\vec\phi(0,0)
+\vec K+\vec\eta_{\mathfrak{o}_-};B)}
{\theta(\vec\phi(0,0)
+\vec K+\vec\eta_{\infty_-};B)\cdot\theta^2(\vec\phi(0,0)+m_1\vec\Omega_{\beta_1}+m_2\vec\Omega_{\beta_2}
    +\vec K+\vec\eta_{\mathfrak{o}_-};B)}\nonumber
\\
& \cdot\frac{\theta(\vec\phi(0,0)+m_1\vec\Omega_{\beta_1}+m_2\vec\Omega_{\beta_2}
+\vec K+\vec\eta_{\infty_+};B)}{\theta(\vec\phi(0,0)
+\vec K+\vec\eta_{\infty_+};B)}\cdot(\gamma_{\beta_1}^+)^{m_1} \cdot (\gamma_{\beta_2}^+)^{m_2}  \nonumber
\\
& \cdot \exp\Big[ \int_{\mathfrak o_-}^{\mathfrak{p}_0} ( m_1\omega[\mathfrak p(\beta_1^2),\infty_+]+m_2\omega[\mathfrak p(\beta_2^2),\infty_+])+\int_{\mathfrak o_-}^{\infty_-}( m_1\omega[\mathfrak p(\beta_1^2),\infty_+]+m_2\omega[\mathfrak p(\beta_2^2),\infty_+])\Big],\label{eq:4.1b}
\end{align}
\end{subequations}
where $\gamma_{\beta_j}^+,\omega[\mathfrak p(\beta_j^2),\infty_+],j=1,2$ are got from \eqref{eq:3.21A} and \eqref{eq:3.21B} by taking $\beta=\beta_1,\beta_2$ respectively.

In order to show the quasi-periodic oscillatory behavior of the solution, we will plot only the Riemann theta-function contribution by omitting the exponential factors in \eqref{eq:4.1}. This is justified because, after taking the logarithm, these factors become linear functions of the discrete variables $m_1,m_2$, which only add a global linear trend and do not affect the oscillatory frequencies or the quasi-periodic structure. The remaining theta quotient part captures the essential nonlinear oscillation modes. In other words, removing the exponential factors can be considered as a form of normalization, and the background is normalized for the purpose of clearer visualization. This normalization does not alter the quasi-periodic nature of the solution, as the frequencies are determined entirely by the periods of the theta functions.


In the case of $g = 1$, the Riemann theta function $\theta$ reduces to the Jacobi theta function $\vartheta_{3}$.
Here we investigate
\begin{subequations}\label{eq:4.2}
\begin{align}
u^{(1)}(m_1,m_2)=&\frac{\vartheta_{3}(\phi(0,0)+m_1\Omega_{\beta_1}+m_2\Omega_{\beta_2}+\Omega_0
+ K+\eta_{\infty_-}|B_{11})}{\vartheta_{3}(\phi(0,0)+\Omega_0
+K+\eta_{\infty_-}|B_{11})}\nonumber
\\
&\frac{\vartheta_{3}(\phi(0,0)+m_1\Omega_{\beta_1}+m_2\Omega_{\beta_2}+\Omega_0
+K+\eta_{\infty_+}|B_{11})\cdot\vartheta_{3}^2(\phi(0,0)+\Omega_0
+K+\eta_{\mathfrak{o}_+}|B_{11})}
{\vartheta_{3}(\phi(0,0)+\Omega_0
+K+\eta_{\infty_+}|B_{11})\cdot\vartheta_{3}^2(\phi(0,0)+m_1\Omega_{\beta_1}+m_2\Omega_{\beta_2}+\Omega_0
    +K+\eta_{\mathfrak{o}_+}|B_{11})},\label{eq:4.2a}
\\
v^{(1)}(m_1,m_2 )=&\frac{\vartheta_{3}(\phi(0,0)+m_1\Omega_{\beta_1}+m_2\Omega_{\beta_2}
+K+\eta_{\infty_+}|B_{11})}{\vartheta_{3}(\phi(0,0)
+K+\eta_{\infty_+}|B_{11})}\nonumber
\\
&\frac{\vartheta_{3}(\phi(0,0)+m_1\Omega_{\beta_1}+m_2\Omega_{\beta_2}+K+\eta_{\infty_-}|B_{11})\cdot\vartheta_{3}^2(\phi(0,0)
+K+\eta_{\mathfrak{o}_-}|B_{11})}
{\vartheta_{3}(\phi(0,0)
+K+\eta_{\infty_-}|B_{11})\cdot\vartheta_{3}^2(\phi(0,0)+m_1\Omega_{\beta_1}+m_2\Omega_{\beta_2}
    +K+\eta_{\mathfrak{o}_-}|B_{11})}.\label{eq:4.2b}
\end{align}
\end{subequations}
To illustrate the solutions, we set the hyperelliptic curve
\begin{equation}\label{eq:C1}
\xi^2=R_{1}(\zeta)=(\zeta-1)(\zeta-3)(\zeta-5)(\zeta-8).
\end{equation}
By using numerical integrals calculated by Maple, we get the normalized holomorphic differential
\begin{equation}\label{eq:w1}
\omega=\displaystyle\frac{1.304674569\,\mathrm{i}\mathrm{d}\zeta}{2\sqrt{(\zeta-1)(\zeta-3)(\zeta-5)(\zeta-8)}},
\end{equation}
and
\begin{equation}\label{eq:bk1}
B_{11}=1.210908403\,\mathrm{i},\ \ K= 0.500000000+0.605454201\,\mathrm{i}.
\end{equation}
By taking $\beta_1^2=10,~\beta_2^2=13$ and $\mathfrak{p}_0=(-3.0, \sqrt{R_{1}(-3.0)})$, we have
\begin{equation}\label{eq:integrals1}
\begin{split}
&\eta_{\mathfrak o_+}=-\int_{\mathfrak{p}_{0}}^{\mathfrak o_+}\omega=-0.084846526\,\mathrm{i},\quad \eta_{\mathfrak o_-}=-\int_{\mathfrak{p}_{0}}^{\mathfrak o_-}\omega=-0.391543538\,\mathrm{i}\\
&\eta_{\infty_+}=-\int_{\mathfrak{p}_{0}}^{\infty_+}\omega=0.094041449\,\mathrm{i},\quad \eta_{\infty_-}=-\int_{\mathfrak{p}_{0}}^{\infty_-}\omega=-0.194149314-0.465152297\,\mathrm{i}\\
&\Omega_{\beta_{1}}=\int_{\mathfrak p(\beta_{1}^2)}^{\infty_+}\omega=0.123456359\,\mathrm{i},\quad \Omega_{\beta_{2}}=\int_{\mathfrak p(\beta_{2}^2)}^{\infty_+}\omega=0.076991339\,\mathrm{i},\ \
\Omega_0=\int_{\mathfrak o_-}^{\infty_+}\omega=-0.485619804\,\mathrm{i}.
\end{split}
\end{equation}
In the genus one case, we set $\phi(0,0)=0$. Then solutions for the discrete DNLS equation \eqref{eq:0.1} is displayed in Figure \ref{Fig-1}. Although we plotted part of the Riemann theta function, Figure \ref{Fig-1} still exhibits a background trend. The reason is that its period average may not vanish.

\begin{figure}[htbp]
  \centering
  \includegraphics[width=7cm]{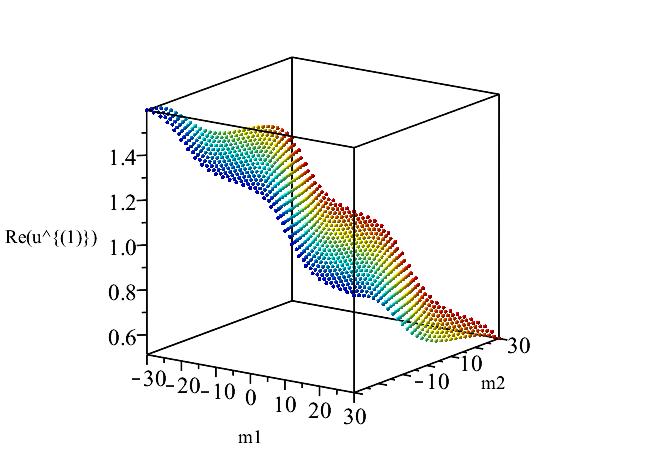}\quad
  \includegraphics[width=7cm]{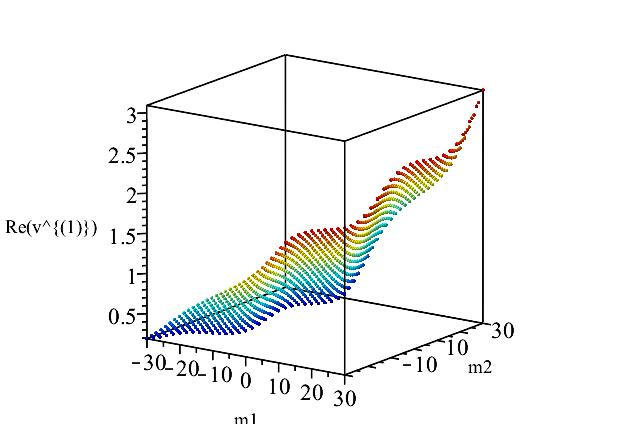}\\
  \caption{For $g=1$, the discrete grid of real part of $u^{(1)}$ and $v^{(1)}$ in \eqref{eq:4.2}}\label{Fig-1}
\end{figure}

In the case of $g = 2$, we consider
\begin{subequations}\label{eq:4.3}
\begin{align}
u^{(2)}(m_1,m_2)=&\frac{\theta(\vec\phi(0,0)+m_1\vec\Omega_{\beta_1}+m_2\vec\Omega_{\beta_2}+\vec\Omega_0
+\vec K+\vec\eta_{\infty_-};B)}{\theta(\vec\phi(0,0)+\vec\Omega_0
+\vec K+\vec\eta_{\infty_-};B)}\nonumber
\\
&\cdot\frac{\theta(\vec\phi(0,0)+m_1\vec\Omega_{\beta_1}+m_2\vec\Omega_{\beta_2}+\vec\Omega_0
+\vec K+\vec\eta_{\infty_+};B)\cdot\theta^2(\vec\phi(0,0)+\vec\Omega_0
+\vec K+\vec\eta_{\mathfrak{o}_+};B)}
{\theta(\vec\phi(0,0)+\vec\Omega_0
+\vec K+\vec\eta_{\infty_+};B)\cdot\theta^2(\vec\phi(0,0)+m_1\vec\Omega_{\beta_1}+m_2\vec\Omega_{\beta_2}+\vec\Omega_0
    +\vec K+\vec\eta_{\mathfrak{o}_+};B)}
,\label{eq:4.3a}
\\
v^{(2)}(m_1,m_2 )=&\frac{\theta(\vec\phi(0,0)+m_1\vec\Omega_{\beta_1}+m_2\vec\Omega_{\beta_2}
+\vec K+\vec\eta_{\infty_+};B)}{\theta(\vec\phi(0,0)
+\vec K+\vec\eta_{\infty_+};B)}\nonumber
\\
&\cdot\frac{\theta(\vec\phi(0,0)+m_1\vec\Omega_{\beta_1}+m_2\vec\Omega_{\beta_2}+\vec K+\vec\eta_{\infty_-};B)\cdot\theta^2(\vec\phi(0,0)
+\vec K+\vec\eta_{\mathfrak{o}_-};B)}
{\theta(\vec\phi(0,0)
+\vec K+\vec\eta_{\infty_-};B)\cdot\theta^2(\vec\phi(0,0)+m_1\vec\Omega_{\beta_1}+m_2\vec\Omega_{\beta_2}
    +\vec K+\vec\eta_{\mathfrak{o}_-};B)}.
\label{eq:4.1b}
\end{align}
\end{subequations}
Let the associated hyperelliptic curve has the following form
\begin{equation}\label{eq:C2}
\xi^2=R_{2}(\zeta)=(\zeta-1)(\zeta-3)(\zeta-5)(\zeta-8)(\zeta-10)(\zeta-12).
\end{equation}
The normalized basis of holomorphic differentials as well as the periodic matrix are chosen as
\begin{equation}\label{eq:w2}
\vec{\omega}=\left(\begin{array}{cc}
-2.647369147\,\mathrm{i}&28.636475437\,\mathrm{i}\\
2.647369147\,\mathrm{i}&-5.779323484\,\mathrm{i}
\end{array}\right)\left(\begin{array}{c}
        \zeta\mathrm{d}\zeta/2\sqrt{R_{2}(\zeta)} \\
         \mathrm{d}\zeta/2\sqrt{R_{2}(\zeta)}
    \end{array}\right),
\end{equation}

\begin{equation}\label{eq:B2}
B=\left(\begin{array}{cc}
1.208908908\,\mathrm{i}&0.333917643\,\mathrm{i}\\
0.333917643\,\mathrm{i}&1.208908908\,\mathrm{i}
\end{array}\right).
\end{equation}
In addition, we take
\begin{equation}
\begin{split}
&\vec{\phi}(0,0)=(0.05,\;0.05)^{\mathrm T},\quad
\vec{\psi}(0,0)=\left(0.05000000-0.4869800\,\mathrm{i},\;0.05000000+0.1265994\,\mathrm{i}\right)^{\mathrm T}.\\
&\vec{\Omega}_{\beta_1}=
\left(
-0.08165449\,\mathrm{i},\;
0.1797596\,\mathrm{i}
\right)^{\mathrm T},
~~
\vec{\Omega}_{\beta_2}
=
\left(
-0.06418381\,\mathrm{i},\;
0.1116356\,\mathrm{i}
\right)^{\mathrm T},
\\
&\vec{\Omega}_{0}
=
\left(
-0.4869800\,\mathrm{i},\;
0.1265994\,\mathrm{i}
\right)^{\mathrm T},
~~
\vec{K}
=
\left(
0.5000000-0.07165829\,\mathrm{i},\;
0.5000000+0.6502627\,\mathrm{i}
\right)^{\mathrm T},
\\
&\vec{\eta}_{\mathfrak o_+}
=
\left(
-0.08507253\,\mathrm{i},\;
0.02340195\,\mathrm{i}
\right)^{\mathrm T},
~~
\vec{\eta}_{\mathfrak o_-}
=
\left(
-0.3921587\,\mathrm{i},\;
0.06821486\,\mathrm{i}
\right)^{\mathrm T},
\\
&\vec{\eta}_{\infty_+}
=
\left(
-0.09468568\,\mathrm{i},\;
0.05825246\,\mathrm{i}
\right)^{\mathrm T},
~~
\vec{\eta}_{\infty_-}
=
\left(
-0.5719160\,\mathrm{i},\;
0.1498692\,\mathrm{i}
\right)^{\mathrm T}.
\end{split}
\end{equation}
The local two-dimensional structure of the unreduced theta quotient is displayed in Figure \ref{Fig-4}.
The relatively small computational window is chosen to prevent strong local amplification associated with small theta denominators from obscuring the underlying local structure. Their geometric differences mainly originate from the distinct theta-function combinations entering the two components. Both exhibit coherent two-dimensional modulation, but with different local curvature and amplitude distributions.
\begin{figure}[htbp]
  \centering
  \includegraphics[width=7cm]{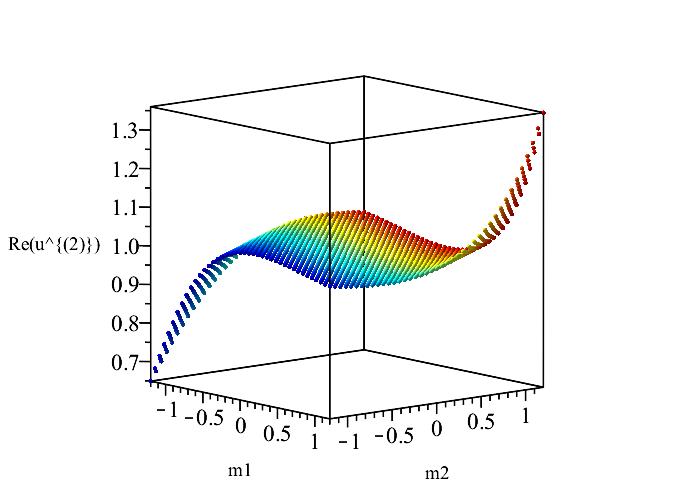}\quad
  \includegraphics[width=7cm]{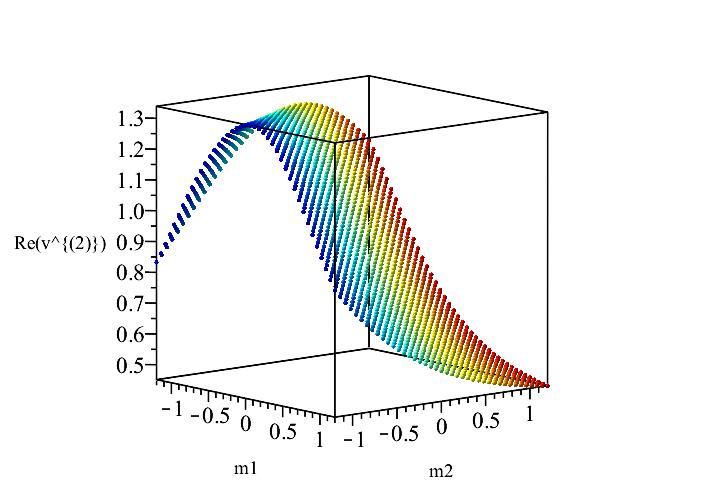}\\
  \caption{For genus $g=2$, local unreduced theta quotient for $u^{(2)}$ and $v^{(2)}$ in \eqref{eq:4.3}}\label{Fig-4}
\end{figure}

To show more details, we  present the quasi-periodic period-lattice structure in Figure \ref{Fig-2}. Over the enlarged lattice domain, repeated peaks, troughs, and band-like modulation patterns become clearly visible. These structures are spatially organized but are not exactly identical from one recurrence cell to another. Therefore, the pattern is more appropriately described as quasi-periodic rather than strictly periodic.
\begin{figure}[htbp]
  \centering
  \includegraphics[width=7cm]{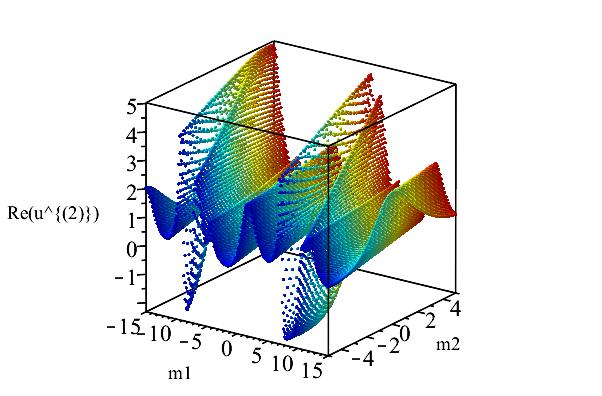}\quad
  \includegraphics[width=7cm]{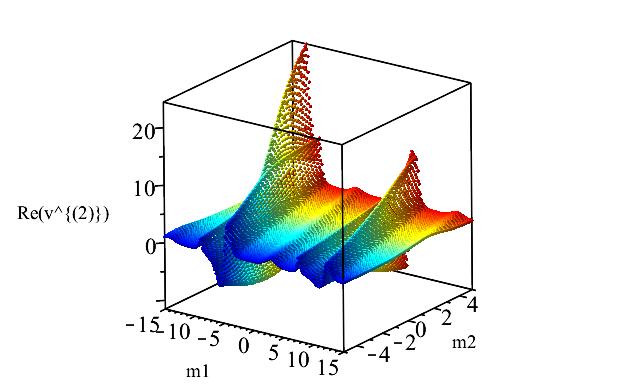}\\
  \caption{For genus $g=2$, theta period lattice for real part of $u^{(2)}$ and $v^{(2)}$ in \eqref{eq:4.3}}\label{Fig-2}
\end{figure}
\begin{figure}[htbp]
  \centering
  \includegraphics[width=7cm]{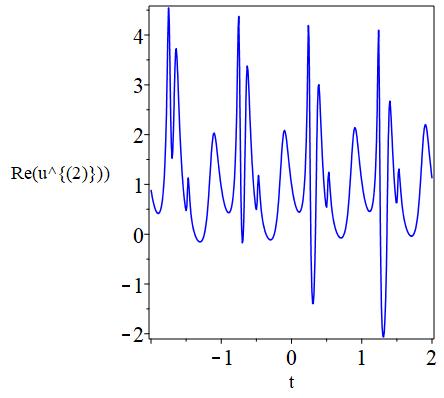}\quad
  \includegraphics[width=7cm]{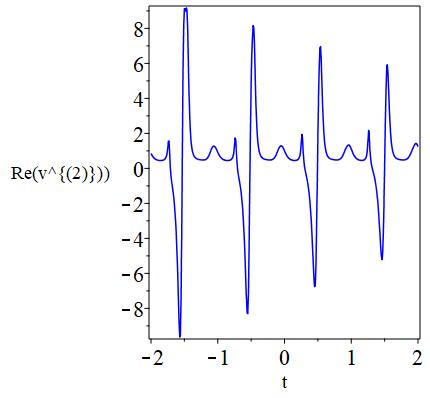}\\
  \caption{For genus $g=2$, slice along approximate recurrence (14,4) for real part of $u^{(2)}$ and $v^{(2)}$ in \eqref{eq:4.3}}\label{Fig-3}
\end{figure}

The Figure \ref{Fig-3} extracts a one-dimensional slice along an approximate recurrence direction of the period lattice in order to further verify the repeated structures observed in the two-dimensional plots. The principal peaks and troughs of the two theta-functional components recur approximately, but successive intervals do not coincide exactly. The nonzero recurrence error is consistent with the quasi-periodic nature of the finite-gap theta-functional solution and confirms that the repeated structures are not merely visualization artifacts.

The numerical results demonstrate pronounced directional dependence, two-dimensional coupled modulation, and quasi-periodic behavior in the genus-2 theta-functional representation. Although the two components share the same Riemann surface, period matrix, and basic Abelian phases, their distinct theta-functional structures lead to substantially different local geometries and amplitude responses. The period-lattice representations and the approximate recurrence slice further support that these patterns originate from the quasi-periodicity of the Riemann theta function rather than from a simple strict periodic repetition.


\section{Conclusions}\label{sec-5}
In this paper, we exhibited explicit quasi-periodic solutions for the discrete DNLS equation. The method we used is based on the nonlinearization technique and the theory of Riemann surface. We showed a Lax pair for the discrete DNLS equation, by which integrable symplectic maps are constructed. We found that the symplectic structure can be represented as the exterior differential of a differential one form. This investigation is especially helpful for the study of the $\mathrm{Q2, H2}$ and further models in the Adler-Bobenko-Suris hierarchy \cite{Adler}.
Furthermore, the quasi-periodic behaviors of the obtained solutions are visualized through graphical illustrations. This work extends the framework established in \cite{XCZ-2022,XZ} to the $g=2$ case.

At last, there are some related open problems. The discrete DNLS equation \eqref{eq:0.1} can't lead to the discretization of DNLS equation \eqref{KN} by
the reduction $v=-\delta \bar{u}$. The discretization procedures for integrable systems with complex conjugate are still waiting for establishing. In addition, in this work the infinity on the Riemann surface is a non-branching point. One may develop this approach to illustrate solutions defined on other types of Riemann surface.
Refer to \cite{Kundu,Kakei,Li}, the three kinds of DNLS equations can be linked through gauge transformations, and their unified solutions have been obtained. Whether one can unify algebro-geometric frameworks for discrete analogues of the DNLS equations will be an interesting problem.

\subsection*{Acknowledgments}
This work of X.X. Xu is supported by National Natural Science Foundation of China(NNSFC) (Grant Nos. 12571269, 11501521 ). The work of J.Y. Zhu is supported by NNSFC (Grant No. 12571268).

\subsection*{Declarations}

No data was used for the research described in the article.


\end{document}